\documentclass[pra,twocolumn,floatfix,aps,superscriptaddress,showpacs,amsfonts]{revtex4}
\usepackage{graphicx}
\usepackage{array}
\usepackage{amsthm}
\usepackage{amsmath}
\usepackage{amssymb}
\newtheorem{theorem}{Theorem}

\begin{document}
\title{Measurement-induced nonlocality based on the relative entropy}
\author{Zhengjun Xi}
%\email{snnuxzj@gmail.com}
\affiliation{College of Computer Science, Shaanxi Normal University, Xi'an, 710062, China}
\author{Xiaoguang Wang}
\affiliation{Zhejiang Institute of Modern Physics, Department of Physics, Zhejiang University, Hangzhou 310027, China}
\author{Yongming Li}
\affiliation{College of Computer Science, Shaanxi Normal University, Xi'an, 710062, China}
%\date{\today}

\begin{abstract}
We quantify the measurement-induced nonlocality (Luo and Fu, Phys.
Rev. Lett. \textbf{106}, 1020401, 2011) from the perspective of the
relative entropy. This quantification leads to an operational
interpretation for the measurement-induced nonlocality, namely, it is
the maximal entropy increase after the locally invariant
measurements. The relative entropy of nonlocality is upper bounded
by the entropy of the measured subsystem. We establish relationship
between the relative entropy of nonlocality and the geometric
nonlocality based on the Hilbert-Schmidt norm, and show that it is
equal to the maximal distillable entanglement. Several trade-off
relations are obtained for tripartite pure states. We also give explicit expressions for
the relative entropy of nonlocality for
Bell-diagonal states.
\end{abstract}
\eid{identifier}
\pacs{03.67.-a, 03.65.Ud, 03.65.Ta}
\maketitle

\section{Introduction}
Nonlocality is a fundamental property of quantum states, and it has
been the subject of intensive studies in the past
decades~\cite{Bergmann,AcinPRA,BrunnerNJP,AugusiakPRL,CavalcantiNC,BancalPRL,BarrettPRL}.
A related topic of interest which has been recently discussed is the measurement-induced
nonlocality~\cite{MethotQIC,FuEPL,GharibianQIC,LuoEPL,LuoPRL},
originating from the superdense coding~\cite{BennettPRL}. One of the
key steps of the superdense coding is to apply a Pauli matrix
operation to half of a Bell state, giving rise to another orthogonal
Bell state, while the reduced density matrices are invariant under
this transformation.
In such a case, the states of both two subsystems are not changed
after the local unitary operation, but the state of the whole system
is changed. This is a nonlocal effect, which was originally
quantified by the Hilbert-Schmidt norm~\cite{FuEPL,GharibianQIC}.
Subsequently, the local unitary invariant was generalized to the
locally invariant measurements, and the measurement-induced
nonlocality was defined via the Hilbert-Schmidt
norm~\cite{LuoEPL,LuoPRL}, which is the so-called geometric
nonlocality.  They derived an analytical formula for any dimensional
pure state and $2\times n$ dimensional mixed states. Recently,
in~\cite{Mirafzali11}, the problem of evaluating
measurement-induced nonlocality for general bipartite mixed states was discussed.

The measurement-induced nonlocality provides a novel classification
scheme for the bipartite states, and may be useful in the
quantitative study of quantum state
steering~\cite{BrunnerNature,WisemanPRL,Schrodinger}.
In this paper, we introduce an entropic measure of nonlocality for
quantum states which is applicable for multipartite systems, and give an physical interpretation. This provides a consistent way
to compare different correlations, such as entanglement, discord,
classical correlations, and quantum
dissonance~\cite{VedralPRAb,Modi2010}.

To quantify the nonlocal effect, we consider the relative entropy between the pre- and post-measurement states,
and introduce the so-called relative entropy of nonlocality.
We then derive some basic properties of the relative entropy of nonlocality.
This quantification leads to an operational interpretation for the nonlocality,
which can be stated by maximal entropy increase in terms of the locally invariant measurement.
 We find that the relative entropy of nonlocality is upper bounded by the von-Neumann entropy of the measured
 subsystem for arbitrary quantum states. As an application, we derive a closed formula for the maximal
 distance achievable for Bell-diagonal states. In such a case, we find that the upper bound of relative entropy of
 nonlocality is saturated for some mixed Bell-diagonal states.

We choose entropic measures for the nonlocality, which is different
from the geometric nonlocality based on the Hilbert-Schmidt
norm~\cite{LuoPRL,LuoEPL}. We can prove that the relative entropy of
nonlocality is always greater than or equal to the squared geometric
nonlocality.
Several trade-off relations can be given by the quantum side
information and the missing information. We also show that the relative entropy of nonlocality
is equal to the maximal distillable entanglement between the
measurement apparatus and the system if the von Neumann measurement
is performed on one part of the system.

This paper is organized as follows. In Sec. \ref{sec:REN}, we define
the relative entropy of nonlocality and present its relevant
properties. We illustrate the relative entropy of nonlocality by the
Bell-diagonal states. In Sec. \ref{sec:REN_HSN}, we discuss the
relationships between the relative entropy of nonlocality and other
measures. We summarize our results in Sec. \ref{sec:conclusion}.

\section{Relative entropy of nonlocality}\label{sec:REN}
Consider a bipartite system with composite Hilbert space $\mathcal{H}=\mathcal{H}^A\otimes\mathcal{H}^B$.
Let $\mathcal{D}(\mathcal{H})$ be the
set of bounded, positive-semidefinite operators with unit trace on $\mathcal{H}$.
Given a quantum state $\rho^{AB}\in \mathcal{D}(\mathcal{H})$ which could be shared between two parties, Alice and Bob, let
$\rho^A$ and $\rho^B$ be the reduced density matrix for each party. As pointed out in~\cite{FuEPL,LuoPRL}, for the bipartite $\rho^{AB}$, one performs local von Neumann measurements which do not disturb the local state $\rho^B=\mathrm{Tr}_A(\rho^{AB})$.
To capture all the nonlocal effects that can be induced by local measurements, Luo defined the measurement-induced nonlocality in terms of the Hilbert-Schmidt norm,
\begin{equation}
\mathcal{N}^{\leftarrow}_{\text{G}}(\rho^{AB})=\max_{\{\Pi^B_k\}}\parallel\rho^{AB}-\tilde{\rho}^{AB}\parallel,\label{def:GNM}
\end{equation}
where the maximum is taken over all the von Neumann measurements $\{\Pi^B_k\}$ which do not disturb $\rho^B$ locally,
that is, $\rho^{B}=\sum_k\Pi^B_k\rho^{B}\Pi^B_k$
and $\tilde{\rho}^{AB}=\sum_k I^A\otimes \Pi^B_k \rho^{AB}I^A\otimes \Pi^B_k$.
The post-measurement state $\tilde{\rho}^{AB}$ can be rewritten as
 \begin{equation}
 \tilde{\rho}^{AB} =\sum_kp_k\rho^A_k\otimes \Pi^B_k,
 \end{equation}
where $\rho_{k}^{A}=\frac{1}{p_k}\mathrm{Tr}_{B}(I^{A}\otimes
\Pi_{k}^{B}\rho^{AB})$ is the post-measurement state of system $A$
that corresponds to the probability $p_{k}=\mathrm{Tr}(I^{A}\otimes
\Pi_{k}^{B}\rho^{AB})$. Here, the Hilbert-Schmidt norm is defined as
$||X||=\sqrt{\mathrm{Tr}(X^\dag X)}$. For the geometric nonlocality~(\ref{def:GNM}), some basic
properties have been listed
in~\cite{LuoPRL,LuoEPL}.

We choose entropic measures for the measurement-induced nonlocality,
and define the relative entropy of nonlocality as
\begin{equation}
\mathcal{N}^{\leftarrow}_{\text{RE}}(\rho^{AB}):=\max_{\{\Pi^B_k\}}S\big(\rho^{AB}||\tilde{\rho}^{AB}\big),\label{eq:REN}
\end{equation}
where the maximum is taken over all the von Neumann measurements $\{\Pi^B_k\}$ which do not disturb $\rho^B$ locally.
Here, $S(X||Y)=\mathrm{Tr}X(\log_2X-\log_2Y)$ is the relative entropy~\cite{Nielsenbook,VedralRMP}.

Note that the relative entropy of nonlocality is different from the relative entropy of quantumness~\cite{Modi2010}
and the one-way quantum deficit~\cite{Oppenheim02,Horodecki062307,ZurekPRA}.
The relationships between the relative entropy of nonlocality and the one-way quantum deficit are
 the same as the relationships between the geometric nonlocality~\cite{LuoPRL} and the geometric measure of quantum discord~\cite{VedralPRL}. Along with this way, the relative entropy of nonlocality is a meaningful and reasonable quantification, and it can be viewed as an indicator of the global effect caused by locally invariant measurements~\cite{LuoPRL,LuoEPL}.

\subsection{Properties for the relative entropy}
We now list some basic properties of the relative entropy of nonlocality $\mathcal{N}^{\leftarrow}_{\mathrm{RE}}$ as follows.

(i) $\mathcal{N}^{\leftarrow}_{\mathrm{RE}}(\rho^{AB})=0$ for any product state $\rho^{AB}=\rho^A\otimes \rho^B$.

(ii) $\mathcal{N}^{\leftarrow}_{\mathrm{RE}}(\rho^{AB})$ is locally
unitary invariant. For any locally unitary operators $U^A\otimes
V^B$ on Hilbert space $\mathcal{H}$, we have
$\mathcal{N}^{\leftarrow}_{\mathrm{RE}}((U^A\otimes
V^B)\rho^{AB}(U^A\otimes
V^B)^\dag)=\mathcal{N}^{\leftarrow}_{\mathrm{RE}}(\rho^{AB})$.

(iii) If $\rho^B$ is non-degenerate, then $\mathcal{N}^{\leftarrow}_{\mathrm{RE}}(\rho^{AB})=S(\rho^{AB}||\tilde{\rho}^{AB})$.

(iv) $\mathcal{N}^{\leftarrow}_{\mathrm{RE}}(\rho^{AB})> 0$ for any entangled state $\rho^{AB}$.

 Following the properties of relative entropy~\cite{VedralRMP}, (i) and (ii) are easy to obtain.
Since the von Neumann measurement $\{\Pi^B_k\}$ does not disturb
$\rho^B$, $\rho^{B}=\sum_k\Pi^B_k\rho^{B}\Pi^B_k$,  after some
manipulation one obtains that each $[\rho^{B},\Pi^B_{k'}]=0$. This
implies that the measurement $\{\Pi^B_k\}$ are eigenprojectors of
$\rho^B$. Another proof of this fact was recently given by
Luo~\cite{LuoPRAb}. On the other hand, if $\rho^B$ is non-degenerate
with spectral decomposition
$\rho^B=\sum_k\lambda_k|\psi_k\rangle\langle \psi_k|$, then we know
that the dimension of the eigenspace of $\rho^{B}$ corresponding to
the eigenvaule $\lambda_k$ is one. This implies that $\{|\psi_k\rangle\langle \psi_k|\}$ is the only von Neumann measurement
that does not
disturb $\rho^B$. Thus, the maximum in Eq.~(\ref{eq:REN}) is not
necessary.

As for property (iv), for any bipartite state $\rho^{AB}$, the relative entropy of entanglement~\cite{VedralPRAb} is defined as
\begin{equation}
E_{\mathrm{RE}}(\rho^{AB})=\min_{\sigma^{AB}} S\big(\rho^{AB}||\sigma^{AB}\big),
\end{equation}
where $\sigma^{AB}$ is taken over all separable states. For any entangled
state $\rho^{AB}$, we obtain
\begin{equation}
E_{\mathrm{RE}}(\rho^{AB})>0.\label{eq:APP1}
\end{equation}
Since $\tilde{\rho}^{AB}$ is a separable state for any locally invariant measurement, we have
\begin{equation}
E_{\mathrm{RE}}(\rho^{AB})\leq S(\rho^{AB}||\tilde{\rho}^{AB})\leq \mathcal{N}^{\leftarrow}_{\mathrm{RE}}(\rho^{AB}).\label{eq:APP2}
\end{equation}
Substituting Eq.~(\ref{eq:APP1}) into Eq.~(\ref{eq:APP2}) leads to
the desired result.

In general, a von Neumann measurement induces entropy increase, and
this change is described by the relative entropy. As pointed out
in~\cite{Modi2010}, one checks that
\begin{equation}
S\big(\rho^{AB}||\tilde{\rho}^{AB}\big)=S(\tilde{\rho}^{AB})-S(\rho^{AB}).\label{eq:RE_Entropy}
\end{equation}
Here, $S(X)=-\mathrm{Tr}X\log_2X$ is von Neumann entropy~\cite{Nielsenbook}.
Thus, the relative entropy of nonlocality can be rewritten as
\begin{equation}
\mathcal{N}^{\leftarrow}_{\mathrm{RE}}(\rho^{AB})=\max_{\{\Pi^B_k\}}\Big[S(\tilde{\rho}^{AB})-S(\rho^{AB})\Big].\label{eq:difference of entropy}
\end{equation}
This implies that the relative entropy of nonlocality is the maximal entropy increase with respect to the locally invariant measurements.
This result gives a meaningful physical interpretation of the measurement-induced nonlocality. More importantly, it allows us to derive a upper bound for the relative entropy of nonlocality.
Next, we will present a formal proof for the upper bound of relative entropy of nonlocality.
\begin{theorem}
 For any bipartite state $\rho^{AB}$, the relative entropy of nonlocality cannot exceed the entropy of measured system,
 \begin{equation}
 \mathcal{N}^{\leftarrow}_{\mathrm{RE}}(\rho^{AB})\leq S(\rho^B).\label{eq:REN_UB}
 \end{equation}
\end{theorem}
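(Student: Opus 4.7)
My plan is to start from the identity in Eq.~(\ref{eq:difference of entropy}), which reduces the theorem to showing $S(\tilde\rho^{AB})-S(\rho^{AB})\le S(\rho^B)$ for an arbitrary non-disturbing von Neumann measurement $\{\Pi^B_k\}$; taking the maximum then inherits the bound. The main tool will be a Stinespring-type dilation of the measurement together with the Araki--Lieb triangle inequality.

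First, I would introduce an ancillary Hilbert space $\mathcal{H}^E$ with orthonormal basis $\{|k\rangle_E\}$ labeled by the outcomes and define the isometry
\begin{equation*}
V:\mathcal{H}^B\to\mathcal{H}^B\otimes\mathcal{H}^E,\qquad V|\phi\rangle=\sum_k \Pi^B_k|\phi\rangle\otimes|k\rangle_E,
\end{equation*}
which satisfies $V^\dag V=\sum_k\Pi^B_k=I^B$. Set $\hat\rho^{ABE}:=(I^A\otimes V)\rho^{AB}(I^A\otimes V)^\dag$. Three short computations then organize the argument: (i) isometric conjugation preserves entropy, so $S(\hat\rho^{ABE})=S(\rho^{AB})$; (ii) the orthogonality $\Pi^B_k\Pi^B_{k'}=\delta_{kk'}\Pi^B_k$ collapses the $k\ne k'$ terms when $E$ is traced out, yielding $\mathrm{Tr}_E\hat\rho^{ABE}=\tilde\rho^{AB}$; (iii) tracing out $AB$ gives $\hat\rho^E=\sum_k p_k|k\rangle\langle k|_E$ with $p_k=\mathrm{Tr}(\Pi^B_k\rho^B)$, so $S(\hat\rho^E)=H(\{p_k\})$.

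Next, I would apply the Araki--Lieb triangle inequality to the tripartite state $\hat\rho^{ABE}$:
\begin{equation*}
S(\tilde\rho^{AB})=S(\hat\rho^{AB})\le S(\hat\rho^{ABE})+S(\hat\rho^E)=S(\rho^{AB})+H(\{p_k\}).
\end{equation*}
To close the argument I would dispose of $H(\{p_k\})$ using non-disturbance: $\rho^B=\sum_k\Pi^B_k\rho^B\Pi^B_k$ is block-diagonal with respect to $\{\Pi^B_k\}$, so the block-entropy decomposition $S(\rho^B)=H(\{p_k\})+\sum_kp_kS(\Pi^B_k\rho^B\Pi^B_k/p_k)\ge H(\{p_k\})$ supplies what I need, and the maximum over admissible measurements then completes the proof.

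The main conceptual ingredient is choosing the correct dilation---once $V$ is in place, the rest is a routine application of Araki--Lieb and a block-entropy identity. The point that I expect to demand most care is keeping track of where the non-disturbance hypothesis enters: it is exactly what upgrades the outcome-entropy bound $H(\{p_k\})$ to the subsystem-entropy bound $S(\rho^B)$ claimed in the theorem, and without it one would be stuck with the weaker estimate $S(\tilde\rho^{AB})-S(\rho^{AB})\le H(\{p_k\})$.
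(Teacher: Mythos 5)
Your proof is correct, but it takes a genuinely different route from the paper's. The paper first uses the fact that non-disturbance forces $\{\Pi^B_k\}$ to be eigenprojectors of $\rho^B$, so that $\rho^B=\sum_kp_k\Pi^B_k$ and the joint entropy theorem yields the exact identity $S(\tilde{\rho}^{AB})=S(\rho^B)+\sum_kp_kS(\rho^A_k)$; the theorem then follows from the inequality $\sum_kp_kS(\rho^A_k)\leq S(\rho^{AB})$, which the paper imports from Ref.~\cite{ZJXiJPA} rather than proving. You instead bound $S(\tilde{\rho}^{AB})$ from above directly, via the Stinespring dilation $V$ and the Araki--Lieb triangle inequality across the cut $AB|E$, and then convert $H(\{p_k\})$ into $S(\rho^B)$ through the block-entropy decomposition that non-disturbance supplies. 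The two arguments are close relatives---indeed, combining your Araki--Lieb step with the joint entropy theorem reproves the very inequality the paper cites---but yours is self-contained and does not require the projectors to be rank one, whereas the paper's identification of $\rho^B=\sum_kp_k\Pi^B_k$ as a spectral decomposition implicitly does. What the paper's route buys in exchange is the exact formula in Eq.~(\ref{eq:REN_3}), which it reuses afterwards (for instance to show that the bound is saturated for pure states and to compute the Bell-diagonal case); your inequality-only argument establishes the upper bound but does not yield that closed expression. One cosmetic remark: the vanishing of the $k\neq k'$ cross terms under $\mathrm{Tr}_E$ follows from the orthonormality of $\{|k\rangle_E\}$ alone; the projector orthogonality is what you actually need in step (iii) to make $\hat{\rho}^E$ diagonal with eigenvalues $p_k$.
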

\begin{proof}
For the local von Neumann measurements~$\{\Pi^B_k\}$ which leaves
$\rho^B$ invariant, $\rho^B=\sum_kp_k\Pi^B_k$ is a spectral
decomposition of $\rho^B$, and $p_k=\mathrm{Tr}(\rho^B\Pi^B_k)$ is
the probability of the outcome $k$~\cite{LuoEPL,LuoPRL}. In such a
case, for the post-measurement state $\tilde{\rho}^{AB}$, from the
result in~\cite{Nielsenbook}, one can directly obtains the following
equation
\begin{equation}
S(\tilde{\rho}^{AB})=S(\rho^B)+\sum_kp_kS(\rho^A_k).
\end{equation}
Substituting this expression into Eq.~(\ref{eq:difference of
entropy}), we get an explicit formula for the relative entropy of
nonlocality,
\begin{equation}
\mathcal{N}^{\leftarrow}_{\mathrm{RE}}(\rho^{AB})=S(\rho^B)-\Big[S(\rho^{AB})
-\max_{\{\Pi_{k}^{B}\}}\sum_kp_kS(\rho^A_k)\Big].\label{eq:REN_3}
\end{equation}
From the result in~\cite{ZJXiJPA}, for any von Neumann measurements, one checks that
 \begin{equation}
 \sum_kp_kS(\rho^A_k)\leq S(\rho^{AB}).
 \end{equation}
Since this order relation is true for all von Neumann measurements, combining this inequality with Eq.~(\ref{eq:REN_3}), we get
\begin{equation}
\mathcal{N}^{\leftarrow}_{\mathrm{RE}}(\rho^{AB})\leq S(\rho^B).\label{eq:REN_upper bound}
\end{equation}
\end{proof}
In particular, let $\rho^{AB}=|\psi\rangle^{AB}\langle\psi|$ be a bipartite pure state, then the relative entropy of nonlocality is equal to the entropy of the measured system,
\begin{equation}
\mathcal{N}^{\leftarrow}_{\mathrm{RE}}(|\psi\rangle^{AB}\langle\psi|)=S(\rho^{B}),\label{eq:REN_von Neumann entropy}
\end{equation}
which is a direct consequence of Eq.~(\ref{eq:REN_3}).
For bipartite pure state $|\psi\rangle^{AB}$, every post-measurement state of the subsystem $A$ is a pure state for
any von Neumann measurement on subsystem $B$, which implies that $S(\rho^{A}_k)=0$.
This shows that the relative entropy of nonlocality reduces to the entropy of the measured system entropy for pure states,
 which coincides with the relative entropy of quantumness and entanglement, but this is not true for general mixed states.

From Eq.~(\ref{eq:REN_3}),
the relative entropy of nonlocality~(\ref{eq:difference of entropy}) is also equal to
the maximal value of difference of two conditional entropies~\cite{ZJXiJPA}
\begin{equation}
\mathcal{N}^{\leftarrow}_{\mathrm{RE}}(\rho^{AB})=\max_{\{\Pi_{k}^{B}\}}\Big[S_{\{\Pi_{k}^{B}\}}(A|B)-S(A|B)\Big],\label{eq:difference of conditional entropy}
\end{equation}
where $S(A|B):=S(\rho^{AB})-S(\rho^{B})$ is the von Neumann conditional entropy and $S_{\{\Pi_{k}^{B}\}}(A|B):=\sum_kp_kS(\rho^A_k)$ is
quantum conditional entropy~\cite{Ollivier02,ZurekPRA,ZJXiJPA}. It is known that quantum discord~\cite{Ollivier02} is equal to the minimal difference of two conditional entropies for any von Neumann measurement~\cite{ZJXiJPA}.

 \subsection{Bell-diagonal states}
 One knows that the maximum is not necessary when the measured subsystem is non-degenerate, otherwise it is necessary to find a set of locally invariant measurement such that the maximal value of the first term in Eq.~(\ref{eq:difference of entropy}) is achieved.
Now, the main difficulty in calculating the relative entropy of nonlocality is to find an optimal locally invariant measurement for the degenerated measured subsystem. If the measured subsystem is degenerate, then
a number of eigenstates share a common eigenvalue.
These eigenstates can span the eigenspace of this eigenvalue. The linear combination of these eigenstates are still eigenstates in this eigenspace.
Then, we conclude that the maximal process is only obtained by taking maximal values in every degenerated eigenspace of the measured subsystem.

To give an intuitive understanding of the relative entropy of nonlocality, let us illustrate it
by a fundamental example.
We consider the Bell-diagonal states, whose reduced states are degenerate, namely,
\begin{equation}
\rho^{AB}=\frac{1}{4}\Big(I^A\otimes I^B+\sum_{i=1}^3c_i\sigma_i\otimes\sigma_i\Big),\label{eq:bell-diagonal state}
\end{equation}
where $I^{A(B)}$ is the identity operator on the subsystem $A(B)$, $c_i$ are real numbers and $\sigma_i$
are Pauli operators. To obtain the relative entropy of nonlocality, combined with Eq.~(\ref{eq:difference of conditional entropy}), we only evaluate the maximal quantum conditional entropy under the locally invariant measurements on $B$. Following the approach in~\cite{LuoPRAb}, we have
\begin{equation}
\max S_{\{\Pi_{k}^{B}\}}(A|B)=f(c_{\min}),
\end{equation}
where the maximum is taken over all the von Neumann measurements on $B$ (since $\rho^{B}=\frac{I^B}{2}$ is degenerate, any von Neumann measurement will leave $\rho^{B}$ invariant). Here, the function
\begin{equation}
f(x)=-\frac{1+x}{2}\log_2\frac{1+x}{2}-\frac{1-x}{2}\log_2\frac{1-x}{2}
 \end{equation}
 and $c_{\min}=\min\{|c_1|,|c_2|,|c_3|\}$. Then, the relative entropy of nonlocality can be given as
\begin{align}
\mathcal{N}^{\leftarrow}_{\mathrm{RE}}(\rho^{AB})&=\max S_{\{\Pi_{k}^{B}\}}(A|B)-S(A|B)\nonumber\\
&=f(c_{\min})-f(c_1,c_2,c_3),
\end{align}
where
\begin{align}
&f(c_1,c_2,c_3)\nonumber\\
=&-\Big(\frac{1-c_1-c_2-c_3}{4}\log_2\frac{1-c_1-c_2-c_3}{4}\nonumber\\
&+\frac{1-c_1+c_2+c_3}{4}\log_2\frac{1-c_1+c_2+c_3}{4}\nonumber\\
&+\frac{1+c_1-c_2+c_3}{4}\log_2\frac{1+c_1-c_2+c_3}{4}\nonumber\\
&+\frac{1+c_1+c_2-c_3}{4}\log_2\frac{1+c_1+c_2-c_3}{4}\Big)-1.\nonumber
\end{align}

 In particular, taking $|c_1|=1, 0<|c_2|=|c_3|=c<1 $, one has $c_{\min}=c$. In such a case, we have
 \begin{align}
 \mathcal{N}^{\leftarrow}_{\mathrm{RE}}(\rho^{AB})=1.
 %\delta^{\leftarrow}_{RE}(\rho^{AB})=1-f(c).
 \end{align}
From this example, we find that the relative entropy of nonlocality is equal to the von Neumann entropy of the measured subsystem for some two-qubit mixed states.

\section{The relations with other measures}\label{sec:REN_HSN}
\subsection{The relative entropy of nonlocality and the geometric nonlocality}

We will compare $\mathcal{N}^{\leftarrow}_{\mathrm{G}}$ and $\mathcal{N}^{\leftarrow}_{\mathrm{RE}}$, the latter will be shown to majorize the former. The main result of this section is shown in the following.
\begin{theorem}
For any bipartite state $\rho$, the relative
entropy of nonlocality is always greater than or equal to the square
of the geometric nonlocality,
\begin{equation}
\frac{1}{2\ln2}\mathcal{N}^{\leftarrow}_{\mathrm{G}}(\rho)^2\leq \mathcal{N}^{\leftarrow}_{\mathrm{RE}}(\rho).
\end{equation}
\end{theorem}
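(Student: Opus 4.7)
The plan is to combine the quantum Pinsker inequality with the ordering of Schatten norms, exploiting that $\mathcal{N}^{\leftarrow}_{\mathrm{G}}$ and $\mathcal{N}^{\leftarrow}_{\mathrm{RE}}$ are maximized over the same set of locally invariant measurements.

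First I would invoke the quantum Pinsker inequality: for any two density operators $\rho$ and $\sigma$,
\[
S(\rho\|\sigma) \geq \tfrac{1}{2\ln 2}\,\|\rho-\sigma\|_1^2,
\]
where $\|\cdot\|_1$ is the trace norm. Next I would use the standard ordering of Schatten norms $\|X\|_2 \leq \|X\|_1$, which follows from $(\sum_i |\lambda_i|)^2 \geq \sum_i \lambda_i^2$ applied to the singular values of $X$. Since $\|\cdot\|_2$ coincides with the Hilbert-Schmidt norm $\|\cdot\|$ used in the definition of $\mathcal{N}^{\leftarrow}_{\mathrm{G}}$, chaining the two estimates yields
\[
S(\rho\|\sigma) \geq \tfrac{1}{2\ln 2}\,\|\rho-\sigma\|^2,
\]
valid for arbitrary $\rho,\sigma$ and, in particular, for $\rho^{AB}$ and $\tilde{\rho}^{AB}$ arising from any locally invariant measurement $\{\Pi^B_k\}$.

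To finish the argument I would fix an optimal measurement $\{\Pi_k^{B,*}\}$ for $\mathcal{N}^{\leftarrow}_{\mathrm{G}}$, producing the post-measurement state $\tilde{\rho}^{AB,*}$ with $\|\rho^{AB}-\tilde{\rho}^{AB,*}\| = \mathcal{N}^{\leftarrow}_{\mathrm{G}}(\rho^{AB})$. Because $\{\Pi_k^{B,*}\}$ is also an admissible choice in the maximization defining $\mathcal{N}^{\leftarrow}_{\mathrm{RE}}$, one obtains
\[
\mathcal{N}^{\leftarrow}_{\mathrm{RE}}(\rho^{AB}) \geq S\bigl(\rho^{AB}\|\tilde{\rho}^{AB,*}\bigr) \geq \tfrac{1}{2\ln 2}\,\mathcal{N}^{\leftarrow}_{\mathrm{G}}(\rho^{AB})^2,
\]
which is the claimed inequality.

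The proof is short, so there is no serious technical obstacle; the conceptual point that needs care is that the measurements maximizing the two quantities need not agree. The correct move is to evaluate both sides at the optimizer of the weaker (geometric) quantity and bound $\mathcal{N}^{\leftarrow}_{\mathrm{RE}}$ below by a single admissible term in its own maximization, rather than attempting to match optima. Once that is set up, everything reduces to Pinsker's inequality combined with the elementary bound $\|\cdot\|_2 \leq \|\cdot\|_1$.
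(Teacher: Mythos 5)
Your proposal is correct and follows essentially the same route as the paper: fix the optimizer of the geometric quantity, apply the quantum Pinsker inequality to that pair of states, and use that the same measurement is admissible in the maximization defining $\mathcal{N}^{\leftarrow}_{\mathrm{RE}}$. The only difference is cosmetic --- you make explicit the norm-ordering step $\|X\|_2\leq\|X\|_1$, which the paper absorbs into its citation of the Pinsker-type inequality.
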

\begin{proof}
If there exists an optimal locally invariant measurement $\{{\Pi^*_k}^B\}$ such that $\tilde{\rho}^*=\sum_kI^A\otimes{\Pi^*_k}^B\rho I^A\otimes{\Pi^*_k}^B$, then the closest $\rho^*$ achieves the maximum of the Hilbert-Schmidt norm $||\rho-\tilde{\rho}||$. Thus, we have
\begin{equation}
\mathcal{N}^{\leftarrow}_{\mathrm{G}}(\rho)^2=||\rho-\tilde{\rho}^*||^2.\label{eq:HSN1}
\end{equation}
We know from~\cite{Watrous} that for any two quantum states $\rho$
and $\sigma$, the following inequality holds,
\begin{equation}
\frac{1}{2\ln2}||\rho-\sigma||^2\leq S(\rho||\sigma).\label{eq:REvsHS}
\end{equation}
Thanks to Eq.(\ref{eq:REvsHS}), this yields
\begin{equation}
\frac{1}{2\ln2}||\rho-\tilde{\rho}^*||^2\leq S(\rho||\tilde{\rho}^*).\label{eq:REN_HSN}
\end{equation}
From the definition of the relative entropy of nonlocality, we have
\begin{equation}
S(\rho||\tilde{\rho}^*)\leq \mathcal{N}^{\leftarrow}_{\mathrm{RE}}(\rho).\label{eq:inequ_REN}
\end{equation}
Combined Eqs.~(\ref{eq:HSN1}), (\ref{eq:REN_HSN}) with Eq.~(\ref{eq:inequ_REN}), we get the desired result.
\end{proof}

\subsection{The relative entropy of nonlocality and distillable entanglement}

As pointed out in~\cite{Streltsov11,Piani11}, the von Neumann
measurement on a part of a composite quantum system unavoidably
creates distillable entanglement between the apparatus $M$ and the
system $AB$. Here, the von Neumann measurement on $B$ can be
realized by a unitary on the total system $M$ and $AB$. The
entanglement between $M$ and $AB$ is called the entanglement created
in the von Neumann measurement $\{\Pi^B_k\}$ on $B$. From
Fig.~\ref{fig:relation}, a local von Neumann measurement can be
represented as
 \begin{equation}
 \rho^{AB}\rightarrow\rho^{AB}\otimes|0\rangle^{M}\langle 0|\rightarrow\tilde{\rho}^{ABM}\rightarrow\tilde{\rho}^{AB}.
 \end{equation}
Suppose that the system $AB$ and the apparatus $M$ is initially in a
product state, namely,
%in the state $\rho^{AB}$ coupled to the measurement apparatus $M$ in a pure initial state $|0\rangle^{M}$,
\begin{equation}
\rho^{ABM}=\rho^{AB}\otimes |0\rangle^{M}\langle 0|.
\end{equation}
After the action of the unitary $I^A\otimes U^{BM}$ the state becomes,
\begin{equation}
\tilde{\rho}^{ABM}=\big(I^A\otimes U^{BM}\big)\rho^{ABM}\big(I^A\otimes {U^{\dagger}}^{BM}\big).
 \end{equation}
We then discard the apparatus $M$, leaving system $AB$ in the post-measurement state
 \begin{equation}
\tilde{\rho}^{AB}=\mathrm{Tr}_{M}\tilde{\rho}^{ABM}.
 \end{equation}
 Consider the distillable entanglement $E_D$~\cite{Plenio07,Bennett96}, for any locally invariant measurements $\{\Pi^{B}_k\}$ on $B$, the following equality holds~\cite{Streltsov11}
\begin{equation}
E^{AB|M}_D(\tilde{\rho}^{ABM})=S\big(\rho^{AB}||\tilde{\rho}^{AB}\big).
\end{equation}
Then, we maximize the equation over all the locally invariant measurements on $B$ and obtain
\begin{equation}
\mathcal{N}^{\leftarrow}_{\mathrm{RE}}(\rho^{AB})=\max_{\{\Pi^B_k\}}E^{AB|M}_D(\tilde{\rho}^{ABM}).
\end{equation}
This result shows that the relative entropy of nonlocality is equal to the maximum of the distillable entanglement between the system and the measurement apparatus.

\begin{figure}
  \includegraphics[scale=0.5]{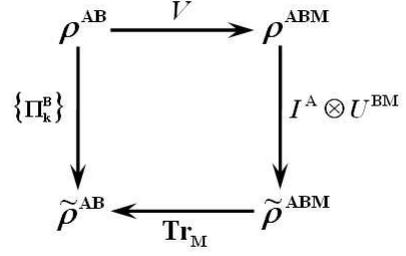}
  \caption{This figure shows an equivalent form. Here, $V$ is
  an isometry, $V: \mathcal{H}^{AB}\rightarrow\mathcal{H}^{AB}\otimes\mathcal{H}^{M}$, with $V:=I^{AB}\otimes |0\rangle^M$ and $V^\dagger V=I^{AB}$. The connection is explained in the text.\label{fig:relation}}
\end{figure}

\subsection{Trade-off relations}
Consider performing the von Neumann measurement $\{\Pi^B_k\}$ which does not disturb $\rho^B$ locally, one can obtain an ensemble of the subsystem $A$, i.e., $\{p_k,\rho^{A}_k\}$.
From the results \cite{DevetakPRA,RenesPRL}, we know that the locally accessible mutual information $\chi(\{p_k,\rho^{A}_k\})$ is also called quantum side information, namely,
\begin{equation}
\chi\left(\{p_k,\rho^{A}_k\}\right):=S(\rho^{A})-\sum_ip_kS(\rho^{A}_k).
\end{equation}
If the maximum is taken over all the von Neumann measurements $\{\Pi^B_k\}$ which do not disturb $\rho^B$ locally, then we can define the minimal quantum side information, namely,
\begin{align}
\mathcal{S}^{\leftarrow}_\chi(\rho^{AB}):&=\min_{\{\Pi^B_k\}}\chi\left(\{p_k,\rho^{A}_k\}\right)\nonumber\\
&=S(\rho^{A})-\max_{\{\Pi^B_k\}}\sum_ip_kS(\rho^{A}_k)
\end{align}

Therefore, for any bipartite quantum state $\rho^{AB}$, the sum of the relative entropy of nonlocality and the minimal quantum side information is equal to quantum mutual information, namely,
\begin{equation}
\mathcal{N}^{\leftarrow}_{\mathrm{RE}}(\rho^{AB})+\mathcal{S}^{\leftarrow}_\chi(\rho^{AB})=\mathcal{I}(\rho^{AB}),
\end{equation}
where $\mathcal{I}(\rho^{AB})=S(\rho^A)+S(\rho^B)-S(\rho^{AB})$ is quantum mutual information~\cite{Nielsenbook}.

From the result in~\cite{Coles1664v2}, we will derive some trade-off
relations in terms of this relation. We consider a tripartite pure
state $\rho^{ABC}$ such that $\rho^{AB}=\mathrm{Tr}_C(\rho^{ABC})$.
Performing the von Neumann measurement $\{\Pi^B_k\}$ which does not
disturb $\rho^B$ locally, we have
\begin{align}
S(\rho^{AB}||\tilde{\rho}^{AB})&=S(\tilde{\rho}^{AB})-S(\rho^{AB})\nonumber\\
&=S(\rho^{B})+\sum_kp_kS(\rho^{A}_k)-S(\rho^{C})\nonumber\\
&=S(\rho^{B})+\sum_kp_kS(\rho^{C}_k)-S(\rho^{C})\nonumber\\
&=S(\rho^{B})-\chi\left(\{p_k,\rho^{C}_k\}\right).
\end{align}
The third equality is due to the fact that for a tripartite pure
state $\rho^{ABC}$, after performing the von Neumann measurement
$\{\Pi^B_k\}$ on subsystem $B$, the residual state of subsystem $AC$
is still a pure state $\rho^{AC}_k$, and one obtains
$S(\rho^{A}_k)=S(\rho^{C}_k)$ for every $k$.

We then maximize the equation over all the locally invariant measurements on $B$, and obtain a trade-off relation as following
\begin{align}
\mathcal{N}^{\leftarrow}_{\mathrm{RE}}(\rho^{AB})+\mathcal{S}^{\leftarrow}_\chi(\rho^{CB})=S(\rho^{B}).
\end{align}
This equality shows that the amount of nonlocality
between $A$ and $B$, plus the amount of minimal quantum side information between
$B$ and the complementary part $C$, must be equal to the entropy
of the measured subsystem $B$.
For general tripartite mixed
state $\rho^{ABC}$, this equality is not true. To be convinced, let us
purify mixed state $\rho^{ABC}$ as $\rho^{ABC}=\mathrm{Tr}_{D}|\Psi\rangle^{ABCD}\langle\Psi|$,
then we have
\begin{align}
\mathcal{N}^{\leftarrow}_{\mathrm{RE}}(\rho^{AB})+\mathcal{S}^{\leftarrow}_\chi(\rho^{(CD)B})=S(\rho^{B}).
\end{align}
Since discarding quantum systems never increases quantum side information (Holevo information)~\cite{ColesPRA,SchumacherPRA}, then we have
\begin{equation}
\mathcal{S}^{\leftarrow}_\chi(\rho^{CB})\leq
\mathcal{S}^{\leftarrow}_\chi(\rho^{(CD)B}),
\end{equation}
which is equivalent to
\begin{equation}
\mathcal{N}^{\leftarrow}_{\mathrm{RE}}(\rho^{AB})+\mathcal{S}^{\leftarrow}_\chi(\rho^{CB})\leq S(\rho^{B})
.\label{eq:tradeoff_mixed_states}
\end{equation}

Using quantum side information, the missing information \cite{DevetakPRA,ColesPRA} about the locally invariant measurements on $B$ given the subsystem $A$ can be defined as
\begin{equation}
S({\{\Pi^B_k\}}|A):=S(\rho^{B})-\chi\left(\{p_k,\rho^{A}_k\}\right).
\end{equation}
This quantity is a measure of absence of the locally invariant
measurements on $B$ of information form $A$. Since
$\chi\left(\{p_k,\rho^{A}_k\}\right)\leq \min[S(\rho^{A}),
S(\rho^{B})]$~\cite{ColesPRA}, we then have $ S({\{\Pi^B_k\}}|A)\geq
0$. We maximize the equation over all the locally invariant
measurements on $B$, and find that the relative entropy of
nonlocality is equal to the maximal missing information, namely,
\begin{equation}
\mathcal{N}^{\leftarrow}_{\mathrm{RE}}(\rho^{AB})=\max_{\{\Pi^B_k\}}S({\{\Pi^B_k\}}|A).
\end{equation}

\section{conclusion}\label{sec:conclusion}

In conclusion, we have quantified the measurement-induced nonlocality in terms of the relative entropy, which is called the relative entropy of nonlocality. It can be interpreted as the maximal entropy increase after the locally invariant measurements.
We have investigated its properties, and presented an upper bound of relative entropy of nonlocality for arbitrary quantum states. For pure states, the upper bound is saturated, and it is identical with the corresponding measures of entanglement and quantum correlations.

We have shown that the relative entropy of nonlocality is always
greater than or equal to the square of the geometric nonlocality for
arbitrary quantum states. We have further shown that the relative
entropy of nonlocality is equal to the maximal distillable
entanglement between the measurement apparatus and the system if the
von Neumann measurement is performed on one part of the system. Some
trade-off relations have been derived by the quantum side
information and the missing information. In particular, we have
obtained that the sum of the relative entropy of nonlocality and the
minimal quantum side information exactly equals quantum mutual
information for bipartite quantum states. If one adopts Streltsov's
and Piani's suggestions~\cite{Streltsov11,Piani11}, our approach can
be generalized to the multipartite setting.

\begin{acknowledgments}
The authors are very grateful to the referees for helpful comments
and criticisms. We thank X.-M. Lu for interesting discussions. Z. J.
Xi is supported by the Superior Dissertation Foundation of Shaanxi
Normal University (S2009YB03). Y. M. Li is supported by NSFC with
Grant No.60873119, and the Higher School Doctoral Subject Foundation
of Ministry of Education of China with Grant No.200807180005. X.Wang
is supported by NFRPC with Grant No. 2012CB921602, and NSFC with
Grants No. 11025527 and No. 10935010.
\end{acknowledgments}

\end{document}